\theoremstyle{plain}
\newtheorem{thm}{\protect\theoremname}
\theoremstyle{plain}
\newtheorem{prop}[thm]{\protect\propositionname}
\newenvironment{proof}[1][\protect\proofname]{\par
	\normalfont\topsep6\p@\@plus6\p@\relax
	\trivlist
	\itemindent\parindent
	\item[\hskip\labelsep\scshape #1]\ignorespaces
}{%
	\endtrivlist\@endpefalse
}
\providecommand{\proofname}{Proof}
\providecommand{\theoremname}{Theorem}
\providecommand{\propositionname}{Proposition}
\begin{document}
\title{Coherence generation with Hamiltonians}

\author{Manfredi Scalici}

\author{Moein Naseri}

\author{Alexander Streltsov}
%\email{a.streltsov@cent.uw.edu.pl}
\affiliation{Centre for Quantum Optical Technologies, Centre of New Technologies,
University of Warsaw, Banacha 2c, 02-097 Warsaw, Poland}

\begin{abstract}
    We explore methods to generate quantum coherence through unitary evolutions, by introducing and studying the coherence generating capacity of Hamiltonians. This quantity is defined as the maximum derivative of coherence that can be achieved by a Hamiltonian. By adopting the relative entropy of coherence as our figure of merit, we evaluate the maximal coherence generating capacity with the constraint of a bounded Hilbert-Schmidt norm for the Hamiltonian. Our investigation yields closed-form expressions for both Hamiltonians and quantum states that induce the maximal derivative of coherence under these conditions. Specifically, for qubit systems, we solve this problem comprehensively for any given Hamiltonian, identifying the quantum states that lead to the largest coherence derivative induced by the Hamiltonian. Our investigation enables a precise identification of conditions under which quantum coherence is optimally enhanced, offering valuable insights for the manipulation and control of quantum coherence in quantum systems.
\end{abstract}

\maketitle

\section{Introduction}

The paradigm of quantum resource theories~\cite{RevModPhys.91.025001} provides a structured approach to investigate the characteristics of quantum systems and their utility in quantum technologies. Notably, the resource theories of entanglement~\cite{RevModPhys.81.865} and coherence~\cite{RevModPhys.89.041003,Wu2021} stand out as significant examples within this framework. The resource theory of entanglement explores the capabilities and constraints of spatially separated agents who operate within their local quantum laboratories and communicate via classical channels~\cite{RevModPhys.81.865}. On the other hand, the resource theory of coherence delves into the challenges and opportunities faced by an agent who is limited in the abilities to generate and preserve quantum coherence~\cite{RevModPhys.89.041003,Wu2021}. Furthermore, the framework of quantum resource theories has been adeptly applied to the field of quantum thermodynamics~\cite{Horodecki2013,Ng2018}. This application has enabled a deeper understanding of how quantum systems can be manipulated within the bounds of energy constraints. 

Every quantum resource theory is grounded on defining two basic elements: free states and free operations~\cite{RevModPhys.91.025001}. Free states refer to quantum states that can be easily generated within a setting that is justified by physical principles. Specifically, within the resource theory of coherence, free states are identified as incoherent states~\cite{PhysRevLett.113.140401}. These are states which are diagonal in a certain reference basis, denoted by $\ket{i}$. The impetus behind exploring this theory is rooted in the phenomenon of unavoidable decoherence, suggesting that incoherent states are those that remain unchanged in the presence of decoherence.

On the aspect of free operations, these ideally are quantum manipulations that can be easily executed, based on the physical considerations on which the resource theory is based. For instance, in the resource theory of entanglement, the set of local operations and classical communication embodies a set of free operations endowed with tangible physical interpretation~\cite{PhysRevA.54.3824}. Within the context of the resource theory of coherence, various sets of free operations have been scrutinized~\cite{PhysRevLett.113.140401,WinterPhysRevLett.116.120404,aberg2006quantifying,YadinPhysRevX.6.041028,ChitambarPhysRevLett.117.030401,Gour_2008,deVicente_2017}. A common characteristic among these sets is their inability to generate coherence from incoherent states. 

At the heart of any quantum resource theory lies the fundamental inquiry into the feasibility of state transformations. It is customary to consider the scenario where $n$ copies of a given initial state $\rho$ are at one's disposal, with the ambition to transform these into $m$ copies of a desired target state $\sigma$. This process is envisioned to accommodate an error margin that diminishes as the number of initial state copies, $n$, increases. The efficiency of this transformation is quantified by the highest feasible ratio of $m/n$, signifying the transformation rate.

Within the framework of the resource theory of coherence, the optimal rate at which one quantum state can be transformed into another is precisely determined when focusing on the set of maximally incoherent operations (MIO). These operations are characterized by their inability to produce coherence from states that are initially incoherent~\cite{aberg2006quantifying}. The maximal transformation rate in this setting is encapsulated by the formula~\cite{WinterPhysRevLett.116.120404}:
\begin{equation}
R(\rho\rightarrow\sigma)=\frac{C_\mathrm{r}(\rho)}{C_\mathrm{r}(\sigma)}, \label{eq:MIOrate}
\end{equation}
where $C_\mathrm{r}(\rho)$ represent the relative entropy of coherence for the initial and target states, respectively. The relative entropy of coherence for a state $\rho$ is defined as~\cite{PhysRevLett.113.140401}:
\begin{equation}
C_\mathrm{r}(\rho)=\min_{\sigma\in\mathcal{I}}S(\rho||\sigma)=S(\Delta[\rho])-S(\rho),
\end{equation}
with the quantum relative entropy $S(\rho||\sigma)=\mathrm{Tr}[\rho\log_{2}\rho]-\mathrm{Tr}[\rho\log_{2}\sigma]$, the von Neumann entropy $S(\rho)=-\mathrm{Tr}[\rho\log_{2}\rho]$, and $\Delta[\rho]=\sum_{i}\ket{i}\!\bra{i}\rho\ket{i}\!\bra{i}$ representing the operation of complete dephasing in the incoherent basis. A similar quantity has been studied previously also within the resource theory of entanglement~\cite{PhysRevLett.78.2275}.

The relative entropy of coherence is an important example of a coherence measure~\cite{PhysRevLett.113.140401}. Essentially, a coherence measure, represented by $C(\rho)$, quantifies the amount of coherence in a quantum state $\rho$. Its fundamental characteristic is that it is non-increasing under the application of free operations, denoted as $\Lambda_f$, that are permissible within the resource theory framework, i.e.,  $C(\Lambda_f[\rho]) \leq C(\rho)$ for any quantum state $\rho$ and any free operation $\Lambda_f$. The literature offers a diverse range of coherence quantifiers, each grounded in either physical principles or mathematical foundations~\cite{aberg2006quantifying,PhysRevLett.113.140401,PhysRevLett.115.020403,PhysRevA.94.060302,PhysRevA.93.032326,PhysRevA.92.022124,deVicente_2017}.

Considering the significance of quantum coherence in quantum information science and quantum technology~\cite{RevModPhys.89.041003,Wu2021}, it becomes essential to explore and comprehend the most efficient methodologies for its generation. One strategy to create quantum coherence involves employing a static quantum channel, represented by $\Lambda$. This approach can successfully create coherence from an initially incoherent state, contingent upon the condition that $\Lambda$ is not in the set MIO. The pursuit of identifying and refining optimal methods for the generation of coherence through static quantum channels has garnered attention and been subject to detailed examination in various studies~\cite{PhysRevLett.113.140401,PhysRevA.92.032331,Diaz2018usingreusing,PhysRevA.105.L060401}.

In this article, we focus on the optimal methods to generate coherence via dynamical evolutions, focusing specifically on unitary evolutions $U_{t}=e^{-itH}$. Using the relative entropy of coherence as a figure of merit, we investigate the maximal derivative of $C_\mathrm{r}$ achievable in this setting, maximized over all initial states $\rho$ and all Hamiltonians $H$ with bounded Hilbert-Schmidt norm. This quantity has a clear operational meaning via Eq.~(\ref{eq:MIOrate}), corresponding to the maximal coherence generation rate achievable via Hamiltonians with bounded Hilbert-Schmidt norm. We characterize optimal initial states and optimal Hamiltonians for any system of dimension $d$. For qubit systems, we provide optimal input state for any given Hamiltonian.

\section{Coherence generating capacity of Hamiltonians}

For a Hamiltonian $H$, we define the \emph{coherence generating capacity} of $H$ as the maximal increase of coherence achievable via a unitary evolution $U_{t}=e^{-itH}$ at time $t=0$, i.e.,
\begin{equation}
\left.C_{\mathrm{gen}}(H)=\max_{\rho}\frac{C_{\mathrm{r}}(e^{-iHt}\rho e^{iHt})}{dt}\right|_{t=0}.
\end{equation}
Functions of this form have been previously studied in entanglement theory, in the context of entanglement generation via non-local Hamiltonians~\cite{PhysRevLett.87.137901,PhysRevA.76.052319}.

The following proposition provides an alternative expression for the coherence generating capacity.

\begin{prop}
For any Hamiltonian $H$ it holds that
\begin{equation}
C_{\mathrm{gen}}(H)=\max_{\rho}i\mathrm{Tr}(H[\rho,\log_{2}\Delta(\rho)]).
\end{equation}
\end{prop}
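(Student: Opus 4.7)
The plan is to differentiate $C_{\mathrm{r}}(\rho_t)$ with $\rho_t=e^{-iHt}\rho e^{iHt}$ at $t=0$ directly, using the split $C_{\mathrm{r}}(\rho)=S(\Delta[\rho])-S(\rho)$, and then massage the result into a commutator expression using the self-adjointness of $\Delta$.

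First I would observe that the second term is trivial: unitary evolution preserves the spectrum, so $S(\rho_t)=S(\rho)$ for all $t$, and its derivative vanishes. Hence
\begin{equation}
\left.\frac{d}{dt}C_{\mathrm{r}}(\rho_t)\right|_{t=0}
=\left.\frac{d}{dt}S(\Delta[\rho_t])\right|_{t=0}.
\end{equation}
From the von Neumann equation, $\dot{\rho}_t\big|_{t=0}=-i[H,\rho]$, so linearity of $\Delta$ gives $\dot{\sigma}_t\big|_{t=0}=-i\Delta[[H,\rho]]$ where $\sigma_t:=\Delta[\rho_t]$.

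Next I would use the standard identity $\frac{d}{dt}\mathrm{Tr}[f(\sigma_t)]=\mathrm{Tr}[f'(\sigma_t)\dot{\sigma}_t]$ applied to $f(x)=-x\log_2 x$, together with $\mathrm{Tr}[\dot{\sigma}_t]=0$, to conclude
\begin{equation}
\left.\frac{d}{dt}S(\sigma_t)\right|_{t=0}
=-\mathrm{Tr}\!\left[\log_2\Delta[\rho]\cdot\dot{\sigma}_0\right]
=i\,\mathrm{Tr}\!\left[\log_2\Delta[\rho]\,\Delta[[H,\rho]]\right].
\end{equation}
The key algebraic step is then to remove the outer $\Delta$: since $\log_2\Delta[\rho]$ is diagonal in the incoherent basis, $\Delta[\log_2\Delta[\rho]]=\log_2\Delta[\rho]$, and because $\Delta$ is self-adjoint with respect to the Hilbert--Schmidt inner product, I can shift it onto the other factor. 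This gives $\mathrm{Tr}[\log_2\Delta[\rho]\,[H,\rho]]$. A final application of cyclicity of the trace rewrites this as $\mathrm{Tr}[H[\rho,\log_2\Delta[\rho]]]$, and maximizing over $\rho$ yields the claim.

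I do not expect a real obstacle. The only place to be careful is the derivative of the entropy when $\Delta[\rho]$ is singular, since $\log_2$ is not smooth at $0$. The standard way around this is to restrict the analysis to the support of $\Delta[\rho]$ (or equivalently treat the zero eigenvalues by continuity using $x\log_2 x\to 0$), after which the identity $\frac{d}{dt}\mathrm{Tr}[f(\sigma_t)]=\mathrm{Tr}[f'(\sigma_t)\dot{\sigma}_t]$ applies on the relevant subspace; this is routine and does not affect the final expression.
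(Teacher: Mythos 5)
Your proposal is correct and follows essentially the same route as the paper: differentiate $S(\Delta[\rho_t])$ (the $S(\rho_t)$ term being unitarily invariant), insert the von Neumann equation, strip the outer dephasing, and finish by cyclicity of the trace. Your use of self-adjointness of $\Delta$ in the Hilbert--Schmidt inner product together with $\Delta[\log_2\Delta[\rho]]=\log_2\Delta[\rho]$ is just a cleaner packaging of the identity $\mathrm{Tr}[\Delta(A)\log_2\Delta(B)]=\mathrm{Tr}[A\log_2\Delta(B)]$ that the paper proves element-wise in its appendix, and the entropy-derivative formula you invoke is the same one the paper derives explicitly.
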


\begin{proof}
    We define the state $\rho_{t}=e^{-iHt}\rho e^{iHt}$ and $\dot{\rho}_{t}=d\rho_{t}/dt$.
It then holds
\begin{equation}
\frac{d}{dt}S(\rho_{t})=-\mathrm{Tr}[\dot{\rho}_{t}\log_{2}\rho_{t}]. \label{eq:EntropyDerivative}
\end{equation}
A proof of this equality is given in the Appendix. For the time derivative of coherence we obtain 
\begin{align}
\frac{dC_{\mathrm{r}}(\rho_{t})}{dt} & =\frac{dS(\Delta[\rho_{t}])}{dt}=-\mathrm{Tr}\left[\left(\frac{d}{dt}\Delta(\rho_{t})\right)\log_{2}\Delta(\rho_{t})\right]\nonumber \\
 & =-\mathrm{Tr}\left[\Delta(\dot{\rho}_{t})\log_{2}\Delta(\rho_{t})\right],\label{eq:Cderivative}
\end{align}
where we have used the fact that dephasing commutes with the time
derivative, i.e., 
\begin{equation}
\frac{d}{dt}\Delta(\rho_{t})=\Delta(\dot{\rho}_{t}).
\end{equation}
Using the von Neumann equation $\dot{\rho}_{t}=-i[H,\rho_{t}]$ Eq.~(\ref{eq:Cderivative})
can be expressed as 
\begin{equation}
\frac{dC_{\mathrm{r}}(\rho_{t})}{dt}=i\mathrm{Tr}\left[\Delta([H,\rho_{t}])\log_{2}\Delta(\rho_{t})\right].
\end{equation}
As we further show in the Appendix,  
\begin{equation}
\mathrm{Tr}\left[\Delta(A)\log_{2}\Delta(B)\right]=\mathrm{Tr}\left[A\log_{2}\Delta(B)\right] \label{eq:DeltaLog}
\end{equation}
holds true for any Hermitian matrix $A$ and any positive matrix $B$. Choosing $A=i\Delta([H,\rho_{t}])$ and
$B=\rho_{t}$ we further obtain 
\begin{equation}
\frac{dC_{\mathrm{r}}(\rho_{t})}{dt}=i\mathrm{Tr}\left[[H,\rho_{t}]\log_{2}\Delta(\rho_{t})\right].
\end{equation}
At time $t=0$ we further have 
\begin{equation}
\left.\frac{dC_{\mathrm{r}}(\rho_{t})}{dt}\right|_{t=0}=i\mathrm{Tr}[[H,\rho]\log_{2}\Delta(\rho)],
\end{equation}
where $\rho=\rho_{t=0}$. This expression can be further written as
\begin{align}
\left.\frac{dC_\mathrm{r}(\rho_{t})}{dt}\right|_{t=0} & =i\mathrm{Tr}(H[\rho,\log_{2}\Delta(\rho)]) \label{eq:Cderivative-2}.
\end{align}
Performing the maximum over all states $\rho$ completes the proof.

\end{proof}

In the following, our goal is to evaluate the maximal coherence generating
capacity over all Hamiltonians with the constraint $||H||_{2}\leq1$, where $||M||_{2}=\sqrt{\mathrm{Tr}[M^{\dagger}M]}$ is the Hilbert-Schmidt norm of a matrix $M$.
As we will see, this problem is closely related to the variance of
the surprisal, as investigated in~\citep{7001656} (a similar technique has been used earlier in~\cite{PhysRevA.76.052319}). For
a probability distribution $\boldsymbol{p}=(p_{0},\ldots,p_{d-1})$,
the the surprisal $-\log_{2}p_{i}$ is a quantifier of the surprise
to obtain the outcome $i$. The variance of the surprisal is given
as 
\begin{equation}
f(\boldsymbol{p})=\sum_{i}p_{i}\left(-\log_{2}p_{i}\right)^{2}-\left[\sum_{i}p_{i}\left(-\log_{2}p_{i}\right)\right]^{2}.\label{eq:f}
\end{equation}
As we will show in the following theorem, the maximal coherence generating
capacity for a system of dimension $d$ is closely related to the
maximal variance of the surprisal $f$.
\begin{thm}
\label{thm:CoherenceGeneration}It holds that 
\begin{equation}
\max_{||H||_{2}\leq1}C_{\mathrm{gen}}(H)=\max_{\boldsymbol{p}}\sqrt{2f(\boldsymbol{p})}.
\end{equation}
\end{thm}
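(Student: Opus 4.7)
The plan is to combine the commutator formula from the previous proposition with a Cauchy-Schwarz argument in the Hilbert-Schmidt inner product, and then reduce the optimization over states to a scalar maximization of the variance of the surprisal.

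First, starting from $C_{\mathrm{gen}}(H) = \max_{\rho}i\mathrm{Tr}(H[\rho,\log_{2}\Delta(\rho)])$, I would swap the two maxima. For a fixed $\rho$, I define $A_{\rho}:=i[\rho,\log_{2}\Delta(\rho)]$ and observe that it is Hermitian, since the commutator of two Hermitian operators is anti-Hermitian. The objective then takes the form of a Hilbert-Schmidt inner product $\mathrm{Tr}(HA_{\rho})$, and Cauchy-Schwarz bounds it by $||A_{\rho}||_{2}$, with equality at the Hermitian choice $H=A_{\rho}/||A_{\rho}||_{2}$. This reduces the problem to $\max_{||H||_{2}\le1}C_{\mathrm{gen}}(H)=\max_{\rho}||[\rho,\log_{2}\Delta(\rho)]||_{2}$.

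Next, I would evaluate this norm in the reference basis. Writing $p_{i}=\rho_{ii}$, so that $\log_{2}\Delta(\rho)$ is diagonal with entries $\log_{2}p_{i}$, the matrix element $[\rho,\log_{2}\Delta(\rho)]_{ij}$ equals $\rho_{ij}(\log_{2}p_{j}-\log_{2}p_{i})$, giving
\begin{equation}
||[\rho,\log_{2}\Delta(\rho)]||_{2}^{2}=\sum_{i,j}|\rho_{ij}|^{2}(\log_{2}p_{j}-\log_{2}p_{i})^{2}.
\end{equation}
The positivity of every $2\times2$ principal submatrix of $\rho$ forces $|\rho_{ij}|^{2}\le p_{i}p_{j}$. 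Substituting this bound, expanding the square, and using $\sum_{i}p_{i}=1$ to collapse the marginals, the right-hand side is at most $2\sum_{i}p_{i}(\log_{2}p_{i})^{2}-2[\sum_{i}p_{i}\log_{2}p_{i}]^{2}=2f(\boldsymbol{p})$, where $\boldsymbol{p}=(p_{0},\ldots,p_{d-1})$.

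Finally, I would verify that this bound is saturated by the pure state $\ket{\psi}=\sum_{i}\sqrt{p_{i}}\ket{i}$, for which $|\rho_{ij}|^{2}=p_{i}p_{j}$ holds simultaneously at every $(i,j)$, and whose diagonal in the reference basis is exactly $\boldsymbol{p}$. Maximizing over the distribution $\boldsymbol{p}$ and taking a square root then yields the theorem. The main subtlety I anticipate is showing that the positivity constraint $|\rho_{ij}|^{2}\le p_{i}p_{j}$ can be saturated at all off-diagonal positions at once while keeping the diagonal fixed; this is precisely what the pure state with amplitudes $\sqrt{p_{i}}$ achieves, so the two maximizations align cleanly and no additional optimization over phases or off-diagonal magnitudes is required.
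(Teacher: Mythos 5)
Your proposal is correct and follows essentially the same route as the paper: the Hilbert--Schmidt (H\"older/Cauchy--Schwarz) bound saturated by $H=A_\rho/\|A_\rho\|_2$, the bound $|\rho_{ij}|^2\le\rho_{ii}\rho_{jj}$, and saturation via the pure state $\sum_i\sqrt{p_i}\ket{i}$. The only cosmetic difference is that you expand $\sum_{i,j}p_ip_j(\log_2 p_j-\log_2 p_i)^2=2f(\boldsymbol{p})$ inline, whereas the paper proves this identity in its appendix.
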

\begin{proof}
Defining the Hermitian matrix 
\begin{equation}
    M=i[\rho,\log_{2}\Delta(\rho)],
\end{equation}
the derivative of the relative entropy of coherence at time zero can be written as 
\begin{equation}
\left.\frac{dC_{\mathrm{r}}(\rho_{t})}{dt}\right|_{t=0}=\mathrm{Tr}(HM).
\end{equation}
We can now perform the maximization over all Hamiltonians
$H$ with the property $||H||_{2}\leq1$. For this, we use H\"older's
inequality, arriving at 
\begin{equation}
\left.\frac{dC_{\mathrm{r}}(\rho_{t})}{dt}\right|_{t=0}=\mathrm{Tr}(HM)\leq||H||_{2}||M||_{2}.
\end{equation}
For a given $M$, this inequality is saturated if $H$ is chosen as
\begin{equation}
H=\frac{M}{||M||_{2}}.\label{eq:Hopt}
\end{equation}
Maximizing over all Hamiltonians with bounded Hilbert-Schmidt norm, we obtain
\begin{equation}
\max_{||H||_{2}\leq1}\left.\frac{dC_{\mathrm{r}}(\rho_{t})}{dt}\right|_{t=0}=\frac{\mathrm{Tr}[M^{2}]}{||M||_{2}}=||M||_{2}=\left\Vert [\rho,\log_{2}\Delta(\rho)]\right\Vert _{2}.
\end{equation}

To complete the proof of the theorem, it remains to maximize $\left\Vert [\rho,\log_{2}\Delta(\rho)]\right\Vert _{2}$
over all states $\rho$. Denoting with $\rho_{ij}$ the elements of
$\rho$, we obtain 
\begin{align} \label{eq:Commutator}
[\rho,\log_{2}\Delta(\rho)] & =\rho\log_{2}\Delta(\rho)-\left[\log_{2}\Delta(\rho)\right]\rho\\
 & =\sum_{i,j}\left(\rho_{ij}\log_{2}\rho_{jj}-\rho_{ij}\log_{2}\rho_{ii}\right)\ket{i}\!\bra{j}\nonumber \\
 & =\sum_{i,j}\rho_{ij}\left(\log_{2}\rho_{jj}-\log_{2}\rho_{ii}\right)\ket{i}\!\bra{j}.\nonumber 
\end{align}
With this, we obtain the following:
\begin{equation}
\left\Vert [\rho,\log_{2}\Delta(\rho)]\right\Vert _{2}^{2}=\sum_{i,j}|\rho_{ij}|^{2}\left(\log_{2}\rho_{jj}-\log_{2}\rho_{ii}\right)^{2}.
\end{equation}
Since $\rho$ is a quantum state, it holds that 
\begin{equation}
\rho_{ii}\rho_{jj}\geq|\rho_{ij}|^{2},
\end{equation}
which implies the inequality 
\begin{equation}
\left\Vert [\rho,\log_{2}\Delta(\rho)]\right\Vert _{2}^{2}\leq\sum_{i,j}\rho_{ii}\rho_{jj}\left(\log_{2}\rho_{jj}-\log_{2}\rho_{ii}\right)^{2}.\label{eq:Proof-1}
\end{equation}

Let us now define the function
\begin{equation}
f(\rho)=\frac{1}{2}\sum_{i,j}\rho_{ii}\rho_{jj}\left(\log_{2}\rho_{jj}-\log_{2}\rho_{ii}\right)^{2}.\label{eq:f-1}
\end{equation}
Note that the right-hand side of Eq.~(\ref{eq:Proof-1}) corresponds
to $2f(\rho)$. As we prove in the Appendix, $f(\rho)$
can also be written as
\begin{equation}
f(\rho)=\sum_{i}\rho_{ii}\left(-\log_{2}\rho_{ii}\right)^{2}-\left[\sum_{i}\rho_{ii}\left(-\log_{2}\rho_{ii}\right)\right]^{2}.\label{eq:f-2}
\end{equation}
Note that this function coincides with the variance of the surprisal
function defined in Eq.~(\ref{eq:f}), if we choose $p_{i}=\rho_{ii}$.

Consider now a pure state of the form
\begin{equation}
\ket{\psi}=\sum_{i=0}^{d-1}\sqrt{q_{i}}\ket{i},\label{eq:psi}
\end{equation}
where the probabilities $q_{i}$ are chosen such that the variance
of the surprisal is maximal. Consider now the density matrix
$\sigma=\ket{\psi}\!\bra{\psi}$. Due to the arguments presented above,
it is clear that $\sigma$ maximizes the function $f$, i.e., $f(\sigma)=\max_{\rho}f(\rho)$.
Moreover, the matrix elements of $\sigma$ fulfill $\sigma_{ii}\sigma_{jj}=|\sigma_{ij}|^{2}$,
which implies 
\begin{align}
2f(\sigma) & =\sum_{i,j}\sigma_{ii}\sigma_{jj}\left(\log_{2}\sigma_{jj}-\log_{2}\sigma_{ii}\right)^{2}\\
 & =\sum_{i,j}|\sigma_{ij}|^{2}\left(\log_{2}\sigma_{jj}-\log_{2}\sigma_{ii}\right)^{2}\nonumber \\
 & =\left\Vert [\sigma,\log_{2}\Delta(\sigma)]\right\Vert _{2}^{2}.\nonumber 
\end{align}

Collecting the arguments presented above, we have 
\begin{align}
\max_{\rho}\left\Vert [\rho,\log_{2}\Delta(\rho)]\right\Vert _{2}^{2} & \leq\max_{\rho}2f(\rho)=2f(\sigma)\\
 & =\left\Vert [\sigma,\log_{2}\Delta(\sigma)]\right\Vert _{2}^{2}\nonumber \\
 & \leq\max_{\rho}\left\Vert [\rho,\log_{2}\Delta(\rho)]\right\Vert _{2}^{2}.\nonumber 
\end{align}
This proves that 
\begin{equation}
\max_{\rho}\left\Vert [\rho,\log_{2}\Delta(\rho)]\right\Vert _{2}^{2}=2f(\sigma),
\end{equation}
and the proof of the theorem is complete.
\end{proof}
An optimal initial state for coherence generation can be given as
follows:
\begin{align}
\ket{\psi} & =\sqrt{\gamma}\ket{0}+\sqrt{\frac{1-\gamma}{d-1}}\sum_{i=1}^{d-1}\ket{i},
\end{align}
where $\gamma\in(0,1)$ is chosen such that the probability distribution
$(\gamma,\frac{1-\gamma}{d-1},\dots,\frac{1-\gamma}{d-1})$ maximizes
the variance of the surprisal~\citep{7001656,PhysRevA.76.052319}. Following
the arguments from the proof of Theorem~\ref{thm:CoherenceGeneration},
an optimal Hamiltonian is given by Eq.~(\ref{eq:Hopt}) with 
\begin{align}
M & =i[\psi,\log_{2}\Delta(\psi)]=i\sum_{k,l}\psi_{kl}\left(\log_{2}\psi_{ll}-\log_{2}\psi_{kk}\right)\ket{k}\!\bra{l}\nonumber \\
 & =i\sqrt{\gamma}\sqrt{\frac{1-\gamma}{d-1}}\left(\log_{2}\frac{1-\gamma}{d-1}-\log_{2}\gamma\right)\sum_{l=1}^{d-1}\ket{0}\!\bra{l}\nonumber \\
 & +i\sqrt{\gamma}\sqrt{\frac{1-\gamma}{d-1}}\left(\log_{2}\gamma-\log_{2}\frac{1-\gamma}{d-1}\right)\sum_{k=1}^{d-1}\ket{k}\!\bra{0}\nonumber \\
 & =i\alpha\left(\ket{0}\!\bra{\phi}-\ket{\phi}\!\bra{0}\right)
\end{align}
with a state $\ket{\phi}=\sum_{i=1}^{d-1}\ket{i}/\sqrt{d-1}$ and
some $\alpha\in\mathbb{R}$. An optimal Hamiltonian can thus be chosen
as 
\begin{equation}
H=\frac{i}{\sqrt{2}}\left(\ket{0}\!\bra{\phi}-\ket{\phi}\!\bra{0}\right).
\end{equation}

We will now focus explicitly on the single-qubit case. In this case, we will evaluate $C_{\mathrm{gen}}(H)$ for any Hamiltonian $H$. In the following, we denote the elements of the density matrix with $\rho_{kl}$, and similarly $H_{kl}$ are elements of $H$. Moreover, $\rho_{01}=|\rho_{01}|e^{i\alpha}$ and similarly $H_{01}=|H_{01}|e^{i\beta}$. Using Eq.~(\ref{eq:Commutator}) we obtain 
\begin{align}
i\mathrm{Tr}\left(H[\rho,\log_{2}\Delta(\rho)]\right) & =i\sum_{k,l}H_{lk}\rho_{kl}\left(\log_{2}\rho_{ll}-\log_{2}\rho_{kk}\right)\\
 & =i\left[H_{10}\rho_{01}\left(\log_{2}\rho_{11}-\log_{2}\rho_{00}\right)\right]\nonumber \\
 & +i\left[H_{01}\rho_{10}\left(\log_{2}\rho_{00}-\log_{2}\rho_{11}\right)\right]\nonumber \\
 & =i\left[H_{10}\rho_{01}-H_{01}\rho_{10}\right]\log_{2}\frac{\rho_{11}}{\rho_{00}}\nonumber \\
 & =i\left|H_{10}\right|\left|\rho_{01}\right|\left[e^{i(\alpha-\beta)}-e^{-i(\alpha-\beta)}\right]\log_{2}\frac{\rho_{11}}{\rho_{00}}\nonumber \\
 & =-2\left|H_{10}\right|\left|\rho_{01}\right|\sin(\alpha-\beta)\log_{2}\frac{\rho_{11}}{\rho_{00}}.\nonumber 
\end{align}
Our goal now is to maximize this expression over all $\alpha$, $|\rho_{01}|$, $\rho_{00}$, and  $\rho_{11}$, taking into account that $\rho$ is a density matrix of a single qubit. Maximizing over $\alpha$ is straightforward, an optimal choice is $\alpha = \beta - \pi/2$. We thus arrive at 
\begin{equation}
C_{\mathrm{gen}}(H)=\max_{\rho_{ij}}2\left|H_{10}\right|\left|\rho_{01}\right|\log_{2}\frac{\rho_{11}}{\rho_{00}}.
\end{equation}
For any qubit density matrix it holds that $|\rho_{01}|\leq \sqrt{\rho_{00} \rho_{11}}$ with equality on pure states. With this we can perform the maximization over $|\rho_{01}|$, leading to 
\begin{equation}
C_{\mathrm{gen}}(H)=\max_{\rho_{ij}}2\left|H_{10}\right|\sqrt{\rho_{00}\rho_{11}}\log_{2}\frac{\rho_{11}}{\rho_{00}}.
\end{equation}
This also means that an optimal state can be chosen to be pure. In the last step we recall that $\rho_{11} = 1-\rho_{00}$, such that 
\begin{equation}
C_{\mathrm{gen}}(H)=\max_{\rho_{00}}2\left|H_{10}\right|\sqrt{\rho_{00}(1-\rho_{00})}\log_{2}\frac{1-\rho_{00}}{\rho_{00}}.
\end{equation}
This maximization can be performed numerically, leading to $\rho_{00}\approx 0.083$. 

We note that analogous results have been previously reported in entanglement theory. In particular, optimal entanglement generation for two-qubit Hamiltonians has been considered in~\cite{PhysRevLett.87.137901}, and optimal states for entanglement generation without ancillas have been derived. Optimal entanglement generation with Hamiltonians of bounded operator norm have been investigated in~\cite{PhysRevA.76.052319}.

\section{Conclusions}

In conclusion, our investigation offers an in-depth exploration of the coherence generating capacity inherent to Hamiltonians, showcasing a methodology that enables the assessment of the maximum coherence derivative achievable in quantum systems of any dimension, governed by Hamiltonians with bounded Hilbert-Schmidt norms. Specifically, for qubit systems, we have achieved a comprehensive resolution of this problem for any Hamiltonian, identifying states that maximize the rate of change in the relative entropy of coherence.

This inquiry opens the door to several compelling questions for future research. A primary area of interest is the extent of coherence enhancement attainable through specific Hamiltonians in systems of dimensions greater than those of qubits. Although our approach provides a novel way to frame this question, the feasibility of solving this maximization problem analytically, or possibly through semidefinite programming, remains to be determined. Additionally, the potential applicability of our techniques to other quantum resource theories, especially the resource theory of entanglement, poses an intriguing prospect. Considering the parallels between the resource theories of coherence and entanglement, there is a promising possibility that our strategies could uncover optimal methods for increasing entanglement in a system using certain Hamiltonian classes.

\section*{Acknowledgements}
We thank Marek Miller for discussion. This work was supported by the ``Quantum Coherence and Entanglement for Quantum Technology'' project, carried out within the First Team programme of the Foundation for Polish Science co-financed by the European Union under the European Regional Development Fund, and the National Science Centre Poland (Grant No. 2022/46/E/ST2/00115).

\bibliography{literature}

%apsrev4-2.bst 2019-01-14 (MD) hand-edited version of apsrev4-1.bst
%Control: key (0)
%Control: author (8) initials jnrlst
%Control: editor formatted (1) identically to author
%Control: production of article title (0) allowed
%Control: page (0) single
%Control: year (1) truncated
%Control: production of eprint (0) enabled
\begin{thebibliography}{25}%
\makeatletter
\providecommand \@ifxundefined [1]{%
 \@ifx{#1\undefined}
}%
\providecommand \@ifnum [1]{%
 \ifnum #1\expandafter \@firstoftwo
 \else \expandafter \@secondoftwo
 \fi
}%
\providecommand \@ifx [1]{%
 \ifx #1\expandafter \@firstoftwo
 \else \expandafter \@secondoftwo
 \fi
}%
\providecommand \natexlab [1]{#1}%
\providecommand \enquote  [1]{``#1''}%
\providecommand \bibnamefont  [1]{#1}%
\providecommand \bibfnamefont [1]{#1}%
\providecommand \citenamefont [1]{#1}%
\providecommand \href@noop [0]{\@secondoftwo}%
\providecommand \href [0]{\begingroup \@sanitize@url \@href}%
\providecommand \@href[1]{\@@startlink{#1}\@@href}%
\providecommand \@@href[1]{\endgroup#1\@@endlink}%
\providecommand \@sanitize@url [0]{\catcode `\\12\catcode `\$12\catcode `\&12\catcode `\#12\catcode `\^12\catcode `\_12\catcode `\%12\relax}%
\providecommand \@@startlink[1]{}%
\providecommand \@@endlink[0]{}%
\providecommand \url  [0]{\begingroup\@sanitize@url \@url }%
\providecommand \@url [1]{\endgroup\@href {#1}{\urlprefix }}%
\providecommand \urlprefix  [0]{URL }%
\providecommand \Eprint [0]{\href }%
\providecommand \doibase [0]{https://doi.org/}%
\providecommand \selectlanguage [0]{\@gobble}%
\providecommand \bibinfo  [0]{\@secondoftwo}%
\providecommand \bibfield  [0]{\@secondoftwo}%
\providecommand \translation [1]{[#1]}%
\providecommand \BibitemOpen [0]{}%
\providecommand \bibitemStop [0]{}%
\providecommand \bibitemNoStop [0]{.\EOS\space}%
\providecommand \EOS [0]{\spacefactor3000\relax}%
\providecommand \BibitemShut  [1]{\csname bibitem#1\endcsname}%
\let\auto@bib@innerbib\@empty
%</preamble>
\bibitem [{\citenamefont {Chitambar}\ and\ \citenamefont {Gour}(2019)}]{RevModPhys.91.025001}%
  \BibitemOpen
  \bibfield  {author} {\bibinfo {author} {\bibfnamefont {E.}~\bibnamefont {Chitambar}}\ and\ \bibinfo {author} {\bibfnamefont {G.}~\bibnamefont {Gour}},\ }\bibfield  {title} {\bibinfo {title} {Quantum resource theories},\ }\href {https://doi.org/10.1103/RevModPhys.91.025001} {\bibfield  {journal} {\bibinfo  {journal} {Rev. Mod. Phys.}\ }\textbf {\bibinfo {volume} {91}},\ \bibinfo {pages} {025001} (\bibinfo {year} {2019})}\BibitemShut {NoStop}%
\bibitem [{\citenamefont {Horodecki}\ \emph {et~al.}(2009)\citenamefont {Horodecki}, \citenamefont {Horodecki}, \citenamefont {Horodecki},\ and\ \citenamefont {Horodecki}}]{RevModPhys.81.865}%
  \BibitemOpen
  \bibfield  {author} {\bibinfo {author} {\bibfnamefont {R.}~\bibnamefont {Horodecki}}, \bibinfo {author} {\bibfnamefont {P.}~\bibnamefont {Horodecki}}, \bibinfo {author} {\bibfnamefont {M.}~\bibnamefont {Horodecki}},\ and\ \bibinfo {author} {\bibfnamefont {K.}~\bibnamefont {Horodecki}},\ }\bibfield  {title} {\bibinfo {title} {Quantum entanglement},\ }\href {https://doi.org/10.1103/RevModPhys.81.865} {\bibfield  {journal} {\bibinfo  {journal} {Rev. Mod. Phys.}\ }\textbf {\bibinfo {volume} {81}},\ \bibinfo {pages} {865} (\bibinfo {year} {2009})}\BibitemShut {NoStop}%
\bibitem [{\citenamefont {Streltsov}\ \emph {et~al.}(2017)\citenamefont {Streltsov}, \citenamefont {Adesso},\ and\ \citenamefont {Plenio}}]{RevModPhys.89.041003}%
  \BibitemOpen
  \bibfield  {author} {\bibinfo {author} {\bibfnamefont {A.}~\bibnamefont {Streltsov}}, \bibinfo {author} {\bibfnamefont {G.}~\bibnamefont {Adesso}},\ and\ \bibinfo {author} {\bibfnamefont {M.~B.}\ \bibnamefont {Plenio}},\ }\bibfield  {title} {\bibinfo {title} {Colloquium: Quantum coherence as a resource},\ }\href {https://doi.org/10.1103/RevModPhys.89.041003} {\bibfield  {journal} {\bibinfo  {journal} {Rev. Mod. Phys.}\ }\textbf {\bibinfo {volume} {89}},\ \bibinfo {pages} {041003} (\bibinfo {year} {2017})}\BibitemShut {NoStop}%
\bibitem [{\citenamefont {Wu}\ \emph {et~al.}(2021)\citenamefont {Wu}, \citenamefont {Streltsov}, \citenamefont {Regula}, \citenamefont {Xiang}, \citenamefont {Li},\ and\ \citenamefont {Guo}}]{Wu2021}%
  \BibitemOpen
  \bibfield  {author} {\bibinfo {author} {\bibfnamefont {K.-D.}\ \bibnamefont {Wu}}, \bibinfo {author} {\bibfnamefont {A.}~\bibnamefont {Streltsov}}, \bibinfo {author} {\bibfnamefont {B.}~\bibnamefont {Regula}}, \bibinfo {author} {\bibfnamefont {G.-Y.}\ \bibnamefont {Xiang}}, \bibinfo {author} {\bibfnamefont {C.-F.}\ \bibnamefont {Li}},\ and\ \bibinfo {author} {\bibfnamefont {G.-C.}\ \bibnamefont {Guo}},\ }\bibfield  {title} {\bibinfo {title} {{Experimental Progress on Quantum Coherence: Detection, Quantification, and Manipulation}},\ }\href {https://doi.org/10.1002/qute.202100040} {\bibfield  {journal} {\bibinfo  {journal} {Advanced Quantum Technologies}\ }\textbf {\bibinfo {volume} {4}},\ \bibinfo {pages} {2100040} (\bibinfo {year} {2021})}\BibitemShut {NoStop}%
\bibitem [{\citenamefont {Horodecki}\ and\ \citenamefont {Oppenheim}(2013)}]{Horodecki2013}%
  \BibitemOpen
  \bibfield  {author} {\bibinfo {author} {\bibfnamefont {M.}~\bibnamefont {Horodecki}}\ and\ \bibinfo {author} {\bibfnamefont {J.}~\bibnamefont {Oppenheim}},\ }\bibfield  {title} {\bibinfo {title} {Fundamental limitations for quantum and nanoscale thermodynamics},\ }\href {https://doi.org/10.1038/ncomms3059} {\bibfield  {journal} {\bibinfo  {journal} {Nature Communications}\ }\textbf {\bibinfo {volume} {4}},\ \bibinfo {pages} {2059} (\bibinfo {year} {2013})}\BibitemShut {NoStop}%
\bibitem [{\citenamefont {Ng}\ and\ \citenamefont {Woods}(2018)}]{Ng2018}%
  \BibitemOpen
  \bibfield  {author} {\bibinfo {author} {\bibfnamefont {N.~H.~Y.}\ \bibnamefont {Ng}}\ and\ \bibinfo {author} {\bibfnamefont {M.~P.}\ \bibnamefont {Woods}},\ }\bibinfo {title} {{Resource Theory of Quantum Thermodynamics: Thermal Operations and Second Laws}},\ in\ \href {https://doi.org/10.1007/978-3-319-99046-0_26} {\emph {\bibinfo {booktitle} {Thermodynamics in the Quantum Regime: Fundamental Aspects and New Directions}}},\ \bibinfo {editor} {edited by\ \bibinfo {editor} {\bibfnamefont {F.}~\bibnamefont {Binder}}, \bibinfo {editor} {\bibfnamefont {L.~A.}\ \bibnamefont {Correa}}, \bibinfo {editor} {\bibfnamefont {C.}~\bibnamefont {Gogolin}}, \bibinfo {editor} {\bibfnamefont {J.}~\bibnamefont {Anders}},\ and\ \bibinfo {editor} {\bibfnamefont {G.}~\bibnamefont {Adesso}}}\ (\bibinfo  {publisher} {Springer International Publishing},\ \bibinfo {address} {Cham},\ \bibinfo {year} {2018})\ pp.\ \bibinfo {pages} {625--650}\BibitemShut {NoStop}%
\bibitem [{\citenamefont {Baumgratz}\ \emph {et~al.}(2014)\citenamefont {Baumgratz}, \citenamefont {Cramer},\ and\ \citenamefont {Plenio}}]{PhysRevLett.113.140401}%
  \BibitemOpen
  \bibfield  {author} {\bibinfo {author} {\bibfnamefont {T.}~\bibnamefont {Baumgratz}}, \bibinfo {author} {\bibfnamefont {M.}~\bibnamefont {Cramer}},\ and\ \bibinfo {author} {\bibfnamefont {M.~B.}\ \bibnamefont {Plenio}},\ }\bibfield  {title} {\bibinfo {title} {{Quantifying Coherence}},\ }\href {https://doi.org/10.1103/PhysRevLett.113.140401} {\bibfield  {journal} {\bibinfo  {journal} {Phys. Rev. Lett.}\ }\textbf {\bibinfo {volume} {113}},\ \bibinfo {pages} {140401} (\bibinfo {year} {2014})}\BibitemShut {NoStop}%
\bibitem [{\citenamefont {Bennett}\ \emph {et~al.}(1996)\citenamefont {Bennett}, \citenamefont {DiVincenzo}, \citenamefont {Smolin},\ and\ \citenamefont {Wootters}}]{PhysRevA.54.3824}%
  \BibitemOpen
  \bibfield  {author} {\bibinfo {author} {\bibfnamefont {C.~H.}\ \bibnamefont {Bennett}}, \bibinfo {author} {\bibfnamefont {D.~P.}\ \bibnamefont {DiVincenzo}}, \bibinfo {author} {\bibfnamefont {J.~A.}\ \bibnamefont {Smolin}},\ and\ \bibinfo {author} {\bibfnamefont {W.~K.}\ \bibnamefont {Wootters}},\ }\bibfield  {title} {\bibinfo {title} {Mixed-state entanglement and quantum error correction},\ }\href {https://doi.org/10.1103/PhysRevA.54.3824} {\bibfield  {journal} {\bibinfo  {journal} {Phys. Rev. A}\ }\textbf {\bibinfo {volume} {54}},\ \bibinfo {pages} {3824} (\bibinfo {year} {1996})}\BibitemShut {NoStop}%
\bibitem [{\citenamefont {Winter}\ and\ \citenamefont {Yang}(2016)}]{WinterPhysRevLett.116.120404}%
  \BibitemOpen
  \bibfield  {author} {\bibinfo {author} {\bibfnamefont {A.}~\bibnamefont {Winter}}\ and\ \bibinfo {author} {\bibfnamefont {D.}~\bibnamefont {Yang}},\ }\bibfield  {title} {\bibinfo {title} {{Operational Resource Theory of Coherence}},\ }\href {https://doi.org/10.1103/PhysRevLett.116.120404} {\bibfield  {journal} {\bibinfo  {journal} {Phys. Rev. Lett.}\ }\textbf {\bibinfo {volume} {116}},\ \bibinfo {pages} {120404} (\bibinfo {year} {2016})}\BibitemShut {NoStop}%
\bibitem [{\citenamefont {Aberg}(2006)}]{aberg2006quantifying}%
  \BibitemOpen
  \bibfield  {author} {\bibinfo {author} {\bibfnamefont {J.}~\bibnamefont {Aberg}},\ }\bibfield  {title} {\bibinfo {title} {{Quantifying Superposition}},\ }\href {https://arxiv.org/abs/quant-ph/0612146} {\bibfield  {journal} {\bibinfo  {journal} {arXiv:quant-ph/0612146}\ } (\bibinfo {year} {2006})}\BibitemShut {NoStop}%
\bibitem [{\citenamefont {Yadin}\ \emph {et~al.}(2016)\citenamefont {Yadin}, \citenamefont {Ma}, \citenamefont {Girolami}, \citenamefont {Gu},\ and\ \citenamefont {Vedral}}]{YadinPhysRevX.6.041028}%
  \BibitemOpen
  \bibfield  {author} {\bibinfo {author} {\bibfnamefont {B.}~\bibnamefont {Yadin}}, \bibinfo {author} {\bibfnamefont {J.}~\bibnamefont {Ma}}, \bibinfo {author} {\bibfnamefont {D.}~\bibnamefont {Girolami}}, \bibinfo {author} {\bibfnamefont {M.}~\bibnamefont {Gu}},\ and\ \bibinfo {author} {\bibfnamefont {V.}~\bibnamefont {Vedral}},\ }\bibfield  {title} {\bibinfo {title} {{Quantum Processes Which Do Not Use Coherence}},\ }\href {https://doi.org/10.1103/PhysRevX.6.041028} {\bibfield  {journal} {\bibinfo  {journal} {Phys. Rev. X}\ }\textbf {\bibinfo {volume} {6}},\ \bibinfo {pages} {041028} (\bibinfo {year} {2016})}\BibitemShut {NoStop}%
\bibitem [{\citenamefont {Chitambar}\ and\ \citenamefont {Gour}(2016)}]{ChitambarPhysRevLett.117.030401}%
  \BibitemOpen
  \bibfield  {author} {\bibinfo {author} {\bibfnamefont {E.}~\bibnamefont {Chitambar}}\ and\ \bibinfo {author} {\bibfnamefont {G.}~\bibnamefont {Gour}},\ }\bibfield  {title} {\bibinfo {title} {{Critical Examination of Incoherent Operations and a Physically Consistent Resource Theory of Quantum Coherence}},\ }\href {https://doi.org/10.1103/PhysRevLett.117.030401} {\bibfield  {journal} {\bibinfo  {journal} {Phys. Rev. Lett.}\ }\textbf {\bibinfo {volume} {117}},\ \bibinfo {pages} {030401} (\bibinfo {year} {2016})}\BibitemShut {NoStop}%
\bibitem [{\citenamefont {Gour}\ and\ \citenamefont {Spekkens}(2008)}]{Gour_2008}%
  \BibitemOpen
  \bibfield  {author} {\bibinfo {author} {\bibfnamefont {G.}~\bibnamefont {Gour}}\ and\ \bibinfo {author} {\bibfnamefont {R.~W.}\ \bibnamefont {Spekkens}},\ }\bibfield  {title} {\bibinfo {title} {The resource theory of quantum reference frames: manipulations and monotones},\ }\href {https://doi.org/10.1088/1367-2630/10/3/033023} {\bibfield  {journal} {\bibinfo  {journal} {New Journal of Physics}\ }\textbf {\bibinfo {volume} {10}},\ \bibinfo {pages} {033023} (\bibinfo {year} {2008})}\BibitemShut {NoStop}%
\bibitem [{\citenamefont {de~Vicente}\ and\ \citenamefont {Streltsov}(2016)}]{deVicente_2017}%
  \BibitemOpen
  \bibfield  {author} {\bibinfo {author} {\bibfnamefont {J.~I.}\ \bibnamefont {de~Vicente}}\ and\ \bibinfo {author} {\bibfnamefont {A.}~\bibnamefont {Streltsov}},\ }\bibfield  {title} {\bibinfo {title} {Genuine quantum coherence},\ }\href {https://doi.org/10.1088/1751-8121/50/4/045301} {\bibfield  {journal} {\bibinfo  {journal} {J. Phys. A}\ }\textbf {\bibinfo {volume} {50}},\ \bibinfo {pages} {045301} (\bibinfo {year} {2016})}\BibitemShut {NoStop}%
\bibitem [{\citenamefont {Vedral}\ \emph {et~al.}(1997)\citenamefont {Vedral}, \citenamefont {Plenio}, \citenamefont {Rippin},\ and\ \citenamefont {Knight}}]{PhysRevLett.78.2275}%
  \BibitemOpen
  \bibfield  {author} {\bibinfo {author} {\bibfnamefont {V.}~\bibnamefont {Vedral}}, \bibinfo {author} {\bibfnamefont {M.~B.}\ \bibnamefont {Plenio}}, \bibinfo {author} {\bibfnamefont {M.~A.}\ \bibnamefont {Rippin}},\ and\ \bibinfo {author} {\bibfnamefont {P.~L.}\ \bibnamefont {Knight}},\ }\bibfield  {title} {\bibinfo {title} {{Quantifying Entanglement}},\ }\href {https://doi.org/10.1103/PhysRevLett.78.2275} {\bibfield  {journal} {\bibinfo  {journal} {Phys. Rev. Lett.}\ }\textbf {\bibinfo {volume} {78}},\ \bibinfo {pages} {2275} (\bibinfo {year} {1997})}\BibitemShut {NoStop}%
\bibitem [{\citenamefont {Streltsov}\ \emph {et~al.}(2015)\citenamefont {Streltsov}, \citenamefont {Singh}, \citenamefont {Dhar}, \citenamefont {Bera},\ and\ \citenamefont {Adesso}}]{PhysRevLett.115.020403}%
  \BibitemOpen
  \bibfield  {author} {\bibinfo {author} {\bibfnamefont {A.}~\bibnamefont {Streltsov}}, \bibinfo {author} {\bibfnamefont {U.}~\bibnamefont {Singh}}, \bibinfo {author} {\bibfnamefont {H.~S.}\ \bibnamefont {Dhar}}, \bibinfo {author} {\bibfnamefont {M.~N.}\ \bibnamefont {Bera}},\ and\ \bibinfo {author} {\bibfnamefont {G.}~\bibnamefont {Adesso}},\ }\bibfield  {title} {\bibinfo {title} {{Measuring Quantum Coherence with Entanglement}},\ }\href {https://doi.org/10.1103/PhysRevLett.115.020403} {\bibfield  {journal} {\bibinfo  {journal} {Phys. Rev. Lett.}\ }\textbf {\bibinfo {volume} {115}},\ \bibinfo {pages} {020403} (\bibinfo {year} {2015})}\BibitemShut {NoStop}%
\bibitem [{\citenamefont {Yu}\ \emph {et~al.}(2016)\citenamefont {Yu}, \citenamefont {Zhang}, \citenamefont {Xu},\ and\ \citenamefont {Tong}}]{PhysRevA.94.060302}%
  \BibitemOpen
  \bibfield  {author} {\bibinfo {author} {\bibfnamefont {X.-D.}\ \bibnamefont {Yu}}, \bibinfo {author} {\bibfnamefont {D.-J.}\ \bibnamefont {Zhang}}, \bibinfo {author} {\bibfnamefont {G.~F.}\ \bibnamefont {Xu}},\ and\ \bibinfo {author} {\bibfnamefont {D.~M.}\ \bibnamefont {Tong}},\ }\bibfield  {title} {\bibinfo {title} {Alternative framework for quantifying coherence},\ }\href {https://doi.org/10.1103/PhysRevA.94.060302} {\bibfield  {journal} {\bibinfo  {journal} {Phys. Rev. A}\ }\textbf {\bibinfo {volume} {94}},\ \bibinfo {pages} {060302} (\bibinfo {year} {2016})}\BibitemShut {NoStop}%
\bibitem [{\citenamefont {Peng}\ \emph {et~al.}(2016)\citenamefont {Peng}, \citenamefont {Jiang},\ and\ \citenamefont {Fan}}]{PhysRevA.93.032326}%
  \BibitemOpen
  \bibfield  {author} {\bibinfo {author} {\bibfnamefont {Y.}~\bibnamefont {Peng}}, \bibinfo {author} {\bibfnamefont {Y.}~\bibnamefont {Jiang}},\ and\ \bibinfo {author} {\bibfnamefont {H.}~\bibnamefont {Fan}},\ }\bibfield  {title} {\bibinfo {title} {Maximally coherent states and coherence-preserving operations},\ }\href {https://doi.org/10.1103/PhysRevA.93.032326} {\bibfield  {journal} {\bibinfo  {journal} {Phys. Rev. A}\ }\textbf {\bibinfo {volume} {93}},\ \bibinfo {pages} {032326} (\bibinfo {year} {2016})}\BibitemShut {NoStop}%
\bibitem [{\citenamefont {Yuan}\ \emph {et~al.}(2015)\citenamefont {Yuan}, \citenamefont {Zhou}, \citenamefont {Cao},\ and\ \citenamefont {Ma}}]{PhysRevA.92.022124}%
  \BibitemOpen
  \bibfield  {author} {\bibinfo {author} {\bibfnamefont {X.}~\bibnamefont {Yuan}}, \bibinfo {author} {\bibfnamefont {H.}~\bibnamefont {Zhou}}, \bibinfo {author} {\bibfnamefont {Z.}~\bibnamefont {Cao}},\ and\ \bibinfo {author} {\bibfnamefont {X.}~\bibnamefont {Ma}},\ }\bibfield  {title} {\bibinfo {title} {Intrinsic randomness as a measure of quantum coherence},\ }\href {https://doi.org/10.1103/PhysRevA.92.022124} {\bibfield  {journal} {\bibinfo  {journal} {Phys. Rev. A}\ }\textbf {\bibinfo {volume} {92}},\ \bibinfo {pages} {022124} (\bibinfo {year} {2015})}\BibitemShut {NoStop}%
\bibitem [{\citenamefont {Mani}\ and\ \citenamefont {Karimipour}(2015)}]{PhysRevA.92.032331}%
  \BibitemOpen
  \bibfield  {author} {\bibinfo {author} {\bibfnamefont {A.}~\bibnamefont {Mani}}\ and\ \bibinfo {author} {\bibfnamefont {V.}~\bibnamefont {Karimipour}},\ }\bibfield  {title} {\bibinfo {title} {Cohering and decohering power of quantum channels},\ }\href {https://doi.org/10.1103/PhysRevA.92.032331} {\bibfield  {journal} {\bibinfo  {journal} {Phys. Rev. A}\ }\textbf {\bibinfo {volume} {92}},\ \bibinfo {pages} {032331} (\bibinfo {year} {2015})}\BibitemShut {NoStop}%
\bibitem [{\citenamefont {D{\'{i}}az}\ \emph {et~al.}(2018)\citenamefont {D{\'{i}}az}, \citenamefont {Fang}, \citenamefont {Wang}, \citenamefont {Rosati}, \citenamefont {Skotiniotis}, \citenamefont {Calsamiglia},\ and\ \citenamefont {Winter}}]{Diaz2018usingreusing}%
  \BibitemOpen
  \bibfield  {author} {\bibinfo {author} {\bibfnamefont {M.~G.}\ \bibnamefont {D{\'{i}}az}}, \bibinfo {author} {\bibfnamefont {K.}~\bibnamefont {Fang}}, \bibinfo {author} {\bibfnamefont {X.}~\bibnamefont {Wang}}, \bibinfo {author} {\bibfnamefont {M.}~\bibnamefont {Rosati}}, \bibinfo {author} {\bibfnamefont {M.}~\bibnamefont {Skotiniotis}}, \bibinfo {author} {\bibfnamefont {J.}~\bibnamefont {Calsamiglia}},\ and\ \bibinfo {author} {\bibfnamefont {A.}~\bibnamefont {Winter}},\ }\bibfield  {title} {\bibinfo {title} {Using and reusing coherence to realize quantum processes},\ }\href {https://doi.org/10.22331/q-2018-10-19-100} {\bibfield  {journal} {\bibinfo  {journal} {{Quantum}}\ }\textbf {\bibinfo {volume} {2}},\ \bibinfo {pages} {100} (\bibinfo {year} {2018})}\BibitemShut {NoStop}%
\bibitem [{\citenamefont {Takahashi}\ \emph {et~al.}(2022)\citenamefont {Takahashi}, \citenamefont {Rana},\ and\ \citenamefont {Streltsov}}]{PhysRevA.105.L060401}%
  \BibitemOpen
  \bibfield  {author} {\bibinfo {author} {\bibfnamefont {M.}~\bibnamefont {Takahashi}}, \bibinfo {author} {\bibfnamefont {S.}~\bibnamefont {Rana}},\ and\ \bibinfo {author} {\bibfnamefont {A.}~\bibnamefont {Streltsov}},\ }\bibfield  {title} {\bibinfo {title} {Creating and destroying coherence with quantum channels},\ }\href {https://doi.org/10.1103/PhysRevA.105.L060401} {\bibfield  {journal} {\bibinfo  {journal} {Phys. Rev. A}\ }\textbf {\bibinfo {volume} {105}},\ \bibinfo {pages} {L060401} (\bibinfo {year} {2022})}\BibitemShut {NoStop}%
\bibitem [{\citenamefont {D\"ur}\ \emph {et~al.}(2001)\citenamefont {D\"ur}, \citenamefont {Vidal}, \citenamefont {Cirac}, \citenamefont {Linden},\ and\ \citenamefont {Popescu}}]{PhysRevLett.87.137901}%
  \BibitemOpen
  \bibfield  {author} {\bibinfo {author} {\bibfnamefont {W.}~\bibnamefont {D\"ur}}, \bibinfo {author} {\bibfnamefont {G.}~\bibnamefont {Vidal}}, \bibinfo {author} {\bibfnamefont {J.~I.}\ \bibnamefont {Cirac}}, \bibinfo {author} {\bibfnamefont {N.}~\bibnamefont {Linden}},\ and\ \bibinfo {author} {\bibfnamefont {S.}~\bibnamefont {Popescu}},\ }\bibfield  {title} {\bibinfo {title} {{Entanglement Capabilities of Nonlocal Hamiltonians}},\ }\href {https://doi.org/10.1103/PhysRevLett.87.137901} {\bibfield  {journal} {\bibinfo  {journal} {Phys. Rev. Lett.}\ }\textbf {\bibinfo {volume} {87}},\ \bibinfo {pages} {137901} (\bibinfo {year} {2001})}\BibitemShut {NoStop}%
\bibitem [{\citenamefont {Bravyi}(2007)}]{PhysRevA.76.052319}%
  \BibitemOpen
  \bibfield  {author} {\bibinfo {author} {\bibfnamefont {S.}~\bibnamefont {Bravyi}},\ }\bibfield  {title} {\bibinfo {title} {Upper bounds on entangling rates of bipartite hamiltonians},\ }\href {https://doi.org/10.1103/PhysRevA.76.052319} {\bibfield  {journal} {\bibinfo  {journal} {Phys. Rev. A}\ }\textbf {\bibinfo {volume} {76}},\ \bibinfo {pages} {052319} (\bibinfo {year} {2007})}\BibitemShut {NoStop}%
\bibitem [{\citenamefont {Reeb}\ and\ \citenamefont {Wolf}(2015)}]{7001656}%
  \BibitemOpen
  \bibfield  {author} {\bibinfo {author} {\bibfnamefont {D.}~\bibnamefont {Reeb}}\ and\ \bibinfo {author} {\bibfnamefont {M.~M.}\ \bibnamefont {Wolf}},\ }\bibfield  {title} {\bibinfo {title} {{Tight Bound on Relative Entropy by Entropy Difference}},\ }\href {https://doi.org/10.1109/TIT.2014.2387822} {\bibfield  {journal} {\bibinfo  {journal} {IEEE Transactions on Information Theory}\ }\textbf {\bibinfo {volume} {61}},\ \bibinfo {pages} {1458} (\bibinfo {year} {2015})}\BibitemShut {NoStop}%
\end{thebibliography}%
\clearpage

\onecolumngrid

\appendix

\section{Proof of Eq.~(\ref{eq:EntropyDerivative})}
\label{AppendixA}
Here we will prove that the time derivative of the von Neumann entropy can be written as
\begin{equation}
\frac{d}{dt}S(\rho_{t})=-\mathrm{Tr}\left[\dot{\rho}_{t}\log_2\rho_{t}\right].
\end{equation}
First, we decompose the density matrix $\rho_t$ in its eigenbasis $\rho_{t}=\sum_{i}\lambda_{i}\ket{\psi_{i}}\!\bra{\psi_{i}}$, where $\lambda_i$ and $\ket{\psi_i}$ are time-dependent eigenvalues and eigenstates, respectively. It follows that:
\begin{align}
\frac{d}{dt}\left(\rho_{t}\log_{2}\rho_{t}\right) & =\sum_{i}\frac{d}{dt}\left(\lambda_{i}\frac{\ln\lambda_{i}}{\ln2}\right)\ket{\psi_{i}}\!\bra{\psi_{i}}+\sum_{i}\left(\lambda_{i}\log_{2}\lambda_{i}\right)\frac{d}{dt}\ket{\psi_{i}}\!\bra{\psi_{i}}\\
 & =\sum_{i}\left(\dot{\lambda}_{i}\frac{\ln\lambda_{i}}{\ln2}+\frac{\dot{\lambda}_{i}}{\ln2}\right)\ket{\psi_{i}}\!\bra{\psi_{i}}+\sum_{i}\left(\lambda_{i}\log_{2}\lambda_{i}\right)\left[\dot{\ket{\psi_{i}}}\!\bra{\psi_{i}}+\ket{\psi_{i}}\!\dot{\bra{\psi_{i}}}\right].\nonumber 
\end{align}
This means that the derivative of the von Neumann entropy can be written as
\begin{align}
\frac{d}{dt}S(\rho_{t}) & = - \mathrm{Tr}\left[\frac{d}{dt}\left(\rho_{t}\log_{2}\rho_{t}\right)\right]= - \sum_{i}\left(\dot{\lambda}_{i}\frac{\ln\lambda_{i}}{\ln2}+\frac{\dot{\lambda}_{i}}{\ln2}\right),
\end{align}
where we have used the fact that 
\begin{equation}
\mathrm{Tr}\left[\dot{\ket{\psi_{i}}}\!\bra{\psi_{i}}+\ket{\psi_{i}}\!\dot{\bra{\psi_{i}}}\right]=\mathrm{Tr}\left[\frac{d}{dt}\left(\ket{\psi_{i}}\!\bra{\psi_{i}}\right)\right]=\frac{d}{dt}\mathrm{Tr}\left[\ket{\psi_{i}}\!\bra{\psi_{i}}\right]=0.
\end{equation}
Noting that $\sum_{i}\dot{\lambda}_{i}=\frac{d}{dt}\sum_{i}\lambda_{i}=0$, we obtain 
\begin{equation}
\frac{d}{dt}S(\rho_{t})= - \sum_{i}\dot{\lambda}_{i}\log_{2}\lambda_{i}.
\end{equation}

In the next step, we write $\dot{\rho}\log_{2}\rho$ as follows:
\begin{equation}
\dot{\rho}\log_{2}\rho=\sum_{i,j}\left(\dot{\lambda}_{i}\ket{\psi_{i}}\!\bra{\psi_{i}}+\lambda_{i}\frac{d}{dt}\ket{\psi_{i}}\!\bra{\psi_{i}}\right)\log_{2}\lambda_{j}\ket{\psi_{j}}\!\bra{\psi_{j}},
\end{equation}
which implies
\begin{align}
\mathrm{Tr}\left[\dot{\rho}\log_{2}\rho\right] & =\sum_{i,j}\left(\dot{\lambda}_{i}\log_{2}\lambda_{j}\mathrm{Tr}\left[\ket{\psi_{i}}\!\braket{\psi_{i}|\psi_{j}}\!\bra{\psi_{j}}\right]+\lambda_{i}\log_{2}\lambda_{j}\mathrm{Tr}\left[\frac{d}{dt}\left(\ket{\psi_{i}}\!\bra{\psi_{i}}\right)\ket{\psi_{j}}\!\bra{\psi_{j}}\right]\right)\\
 & =\sum_{i}\dot{\lambda}_{i}\log_{2}\lambda_{i}+\sum_{i,j}\lambda_{i}\log_{2}\lambda_{j}\mathrm{Tr}\left[\left(\dot{\ket{\psi_{i}}}\!\bra{\psi_{i}}+\ket{\psi_{i}}\!\dot{\bra{\psi_{i}}}\right)\ket{\psi_{j}}\!\bra{\psi_{j}}\right]\nonumber \\
 & =\sum_{i}\dot{\lambda}_{i}\log_{2}\lambda_{i}+\sum_{i}\lambda_{i}\log_{2}\lambda_{i}\left(\braket{\psi_{i}|\dot{\psi}_{i}}+\braket{\dot{\psi}_{i}|\psi_{i}}\right)=\sum_{i}\dot{\lambda}_{i}\log_{2}\lambda_{i}.\nonumber 
\end{align}
This completes the proof.

\section{Proof of Eq.~(\ref{eq:DeltaLog})}
Let $A$ be a Hermitian matrix, and $B$ be a positive matrix. It
holds that 
\begin{align}
\mathrm{Tr}\left[A\log_{2}\Delta(B)\right] & =\mathrm{Tr}\left[A\log_{2}\left(\sum_{i}\ket{i}\!\bra{i}B\ket{i}\!\bra{i}\right)\right]=\sum_{i}\mathrm{Tr}\left[A\log_{2}\left(\ket{i}\!\bra{i}B\ket{i}\!\bra{i}\right)\right]\nonumber \\
 & =\sum_{i,j}\mathrm{Tr}\left[A\ket{j}\!\bra{j}\log_{2}\left(\ket{i}\!\bra{i}B\ket{i}\!\bra{i}\right)\ket{j}\!\bra{j}\right]\nonumber \\
 & =\sum_{i,j}\mathrm{Tr}\left[\ket{j}\!\bra{j}A\ket{j}\!\bra{j}\log_{2}\left(\ket{i}\!\bra{i}B\ket{i}\!\bra{i}\right)\right]\nonumber \\
 & =\mathrm{Tr}\left[\sum_{j}\ket{j}\!\bra{j}A\ket{j}\!\bra{j}\log_{2}\left(\sum_{i}\ket{i}\!\bra{i}B\ket{i}\!\bra{i}\right)\right]\nonumber \\
 & =\mathrm{Tr}\left[\Delta(A)\log_{2}\Delta(B)\right].
\end{align}

\section{Proof of Eq.~(\ref{eq:f-2})}

Here we will show that the function 
\begin{equation}
f(\rho)=\frac{1}{2}\sum_{i,j}\rho_{ii}\rho_{jj}\left(\log_{2}\rho_{jj}-\log_{2}\rho_{ii}\right)^{2}
\end{equation}
can also be written as 
\begin{equation}
f(\rho)=\sum_{i}\rho_{ii}\left(-\log_{2}\rho_{ii}\right)^{2}-\left[\sum_{i}\rho_{ii}\left(-\log_{2}\rho_{ii}\right)\right]^{2}.
\end{equation}
This follows directly from the following equalities:

\begin{align}
\sum_{i}\rho_{ii}\left(-\log_{2}\rho_{ii}\right)^{2}-\left[\sum_{i}\rho_{ii}\left(-\log_{2}\rho_{ii}\right)\right]^{2} & =\sum_{i}\rho_{ii}\left(\log_{2}\rho_{ii}\right)^{2}-\sum_{i}\rho_{ii}^{2}\left(\log_{2}\rho_{ii}\right)^{2}-\sum_{i\neq j}\rho_{ii}\rho_{jj}\log_{2}\rho_{ii}\log_{2}\rho_{jj}\\
 & =\sum_{i}\rho_{ii}\left(1-\rho_{ii}\right)\left(\log_{2}\rho_{ii}\right)^{2}-\sum_{i\neq j}\rho_{ii}\rho_{jj}\log_{2}\rho_{ii}\log_{2}\rho_{jj}\nonumber \\
 & =\sum_{i}\rho_{ii}\left(\sum_{j\neq i}\rho_{jj}\right)\left(\log_{2}\rho_{ii}\right)^{2}-\sum_{i\neq j}\rho_{ii}\rho_{jj}\log_{2}\rho_{ii}\log_{2}\rho_{jj}\nonumber \\
 & =\sum_{i\neq j}\rho_{ii}\rho_{jj}\left(\log_{2}\rho_{ii}\right)^{2}-\sum_{i\neq j}\rho_{ii}\rho_{jj}\log_{2}\rho_{ii}\log_{2}\rho_{jj}\nonumber \\
 & =\frac{1}{2}\sum_{i\neq j}\rho_{ii}\rho_{jj}\left(\log_{2}\rho_{ii}\right)^{2}+\frac{1}{2}\sum_{i\neq j}\rho_{ii}\rho_{jj}\left(\log_{2}\rho_{jj}\right)^{2}-\sum_{i\neq j}\rho_{ii}\rho_{jj}\log_{2}\rho_{ii}\log_{2}\rho_{jj}\nonumber \\
 & =\frac{1}{2}\sum_{i\neq j}\rho_{ii}\rho_{jj}\left[\left(\log_{2}\rho_{ii}\right)^{2}+\left(\log_{2}\rho_{jj}\right)^{2}-2\log_{2}\rho_{ii}\log_{2}\rho_{jj}\right]\nonumber \\
 & =\frac{1}{2}\sum_{i\neq j}\rho_{ii}\rho_{jj}\left(\log_{2}\rho_{ii}-\log_{2}\rho_{ii}\right)^{2}=\frac{1}{2}\sum_{i,j}\rho_{ii}\rho_{jj}\left(\log_{2}\rho_{ii}-\log_{2}\rho_{ii}\right)^{2}.\nonumber 
\end{align}
Here, we have used the fact that $\sum_{j\neq i}\rho_{jj}=1-\rho_{ii}$.
\end{document}